\documentclass[journal]{IEEEtran}

\usepackage{graphicx,xcolor}
\usepackage{mathtools,amssymb,mathrsfs,amsfonts,dsfont}

\usepackage[standard,amsmath,thmmarks]{ntheorem}
\newtheorem{problem}{Problem}

\usepackage[sort,compress]{cite}
\usepackage{graphicx}
\usepackage{subcaption}
\usepackage{footmisc}

\newcommand\EXP{\mathds{E}}
\newcommand\reals{\mathds R}
\newcommand\integers{\mathds Z}

\begin{document}

\title{Counterexamples on the monotonicity of delay optimal strategies for
       energy harvesting transmitters}

\author{Borna Sayedana and Aditya Mahajan%
\thanks{The authors are with
	the Department of Electrical and Computer Engineering, McGill
	University, Montreal, QC, Canada. Email: {borna.sayedana@mail.mcgill.ca, aditya.mahajan@}mcgill.ca. This work was supported in part by Natural Sciences and Engineering
Research Council of Canada (NSERC) Discovery Grant RGPIN-2016-05165.}}

\maketitle

\begin{abstract}
  We consider cross-layer design of delay optimal transmission strategies
  for energy harvesting transmitters where the data and energy arrival
  processes are stochastic. Using Markov
  decision theory, we show that the value function is weakly increasing in
  the queue state and weakly decreasing in the battery state. It is natural to
  expect that the delay optimal policy should be weakly increasing in the queue and
  battery states. We show via counterexamples that this is not the case.
  In fact, we show that for some sample scenarios the delay optimal policy may perform 5--13\% better than the best \emph{monotone} policy.
\end{abstract}

\begin{IEEEkeywords}
  Energy harvesting transmitters, Markov decision processes, monotone policy,
  power-delay trade-off.
\end{IEEEkeywords}

\section{Introduction}

Latency is an important consideration in many Internet of
Things (IoT) applications which provide real-time and/or critical services.
Often IoT devices are battery powered and harvest
energy from the environment. In such situations, intelligent transmission
strategies are needed to mitigate the unreliability of
available energy and provide low-latency services. 

In this paper, we investigate the cross-layer design of delay optimal
transmission strategies for energy harvesting transmitters when both
the data arrival and the energy arrival processes are stochastic. Our
motivation is to characterize qualitative properties of optimal transmission
policies for such model. For example, in queuing theory, it is often
possible to establish that the optimal policy is monotone increasing in the
queue length \cite{stidham1989monotonic,gallisch1979monotone}. Such a
property, in addition to being intuitively satisfying, simplifies the search
and implementation of the optimal strategies. Such monotonicity properties are
also known to hold for cross-layer design of communication systems when a
energy is always available at the
transmitter~\cite{berry2000power}. So it is natural to ask if such qualitative
properties hold for energy harvesting transmitters. 

Partial answers to this question for throughput optimal policies for energy
harvesting transmitters are provided in \cite{zafer2008optimal,
ahmed2016optimal, sinha2012optimal, mao2014joint,kashef2012optimal,
shaviv2018online}. Under the assumptions of backlogged traffic or 
deterministic data arrival process or 
deterministic energy arrival process, these papers show that the optimal
policy is weakly increasing in the queue state and/or weakly increasing in the
battery state. There are other papers that investigate the structure of delay
or throughput optimal policies under the assumption of a deterministic energy
arrival process~\cite{yang2012optimal,tutuncuoglu2012optimum,ozel2011transmission}.

There are some papers which investigate the problem of delay optimization for
energy harvesting transmitters~\cite{ozel2011transmission, Fawaz2018,
Sharma2018}, but they don't characterize the structure of delay-optimal
policies rather provide numerical solutions or propose low-complexity
heuristic policies or only establish structural properties of value functions.

We show that the delay optimal policy for energy harvesting communication
systems is not necessarily monotone in battery or queue state. This is in
contrast to the monotonicity of delay optimal policies when energy is always
available~\cite{berry2000power} or throughput optimal policies for energy
harvesting models~\cite{zafer2008optimal, ahmed2016optimal, sinha2012optimal,
mao2014joint,kashef2012optimal, shaviv2018online}. We present counterexamples
to show that the delay optimal policy need not be weakly increasing in queue
or battery state. Furthermore, for some sample scenarios, the
performance of the optimal policy is about 5--13\% better than that
of the best monotone policy. These counterexamples continue to
hold for i.i.d.\@ fading channels as well.

\subsubsection*{Notation}

Uppercase letters (e.g., $E$, $N$, etc.) represent random variables; the
corresponding lowercase letters (e.g., $e$, $n$, etc.) represent their
realizations. Cursive letters (e.g., $\mathcal L$, $\mathcal B$, etc.)
represent sets. The sets of real, positive integers, and non-negative integers
are denoted by $\reals$, $\integers_{> 0}$, and $\integers_{\ge 0}$
respectively. The notation $[a]_{L}$ is a short hand for $\min \{a,L\}$.

\section{Model And Problem Formulation} \label{sec:model}

\begin{figure}[!t]
  \centering
  \includegraphics[width=0.85\linewidth]{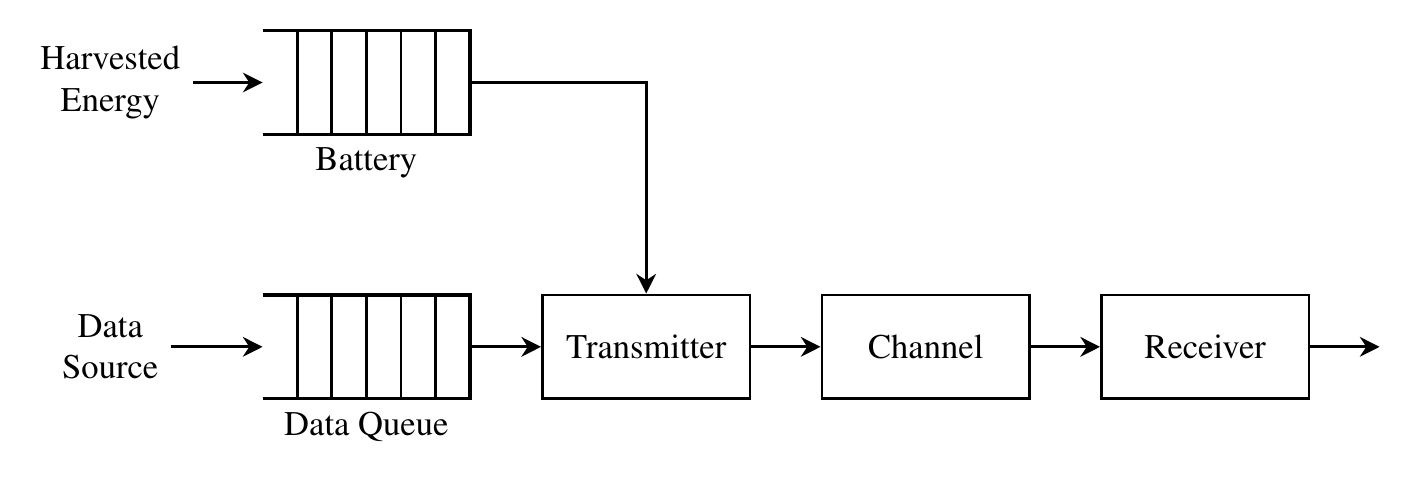}
  \caption{Model of a transmitter with energy-harvester}
  \label{fig:model}
\end{figure}

Consider a discrete-time communication system shown in Fig~\ref{fig:model}. A source
generates bursty data packets that have to be transmitted to a receiver by an
energy-harvesting transmitter. The transmitter has finite buffer where the
data packets are queued and a finite capacity battery where the harvested
energy is stored. 

At the beginning of a slot, the transmitter picks some data packets from
the queue, encodes them, and transmits the encoded symbol. Transmitting a
symbol requires energy that depends on the number of encoded packets in the
symbol. At the end of the slot, the system incurs a delay penalty that
depends on the number of packets remaining in the queue.

Time slots are indexed by $k \in \integers_{\geq 0}$. The length of the
buffer is denoted by $L$ and the size of the battery by $B$; $\mathcal{L}$
and $\mathcal B$ denote the sets $\{0,1,\ldots,L\}$ and $\{0,1,\ldots,
B\}$, respectively. Other variables are as follows:
\begin{itemize}
  \item $N_{k} \in \mathcal{L}$:  the number of packets in the
    queue at the beginning of slot~$k$.
  \item $A_{k} \in \mathcal{L}$: the number of packets that arrive
    during slot~$k$.
  \item $S_{k} \in \mathcal{B}$: the energy stored in the battery at
    the beginning of slot~$k$.
  \item $E_{k} \in \mathcal{B}$:  the energy that is harvested during
    slot~$k$.
  \item $U_{k}$: the number of packets transmitted during slot
    $k$. The feasible choices of $U_{k}$ are denoted by $\mathcal{U}(N_k,
    S_k)$
    where
    \[
      \mathcal{U}(n,s) \coloneqq \{ u \in \mathcal{L} : 
      u \le n \text{ and } p(u) \le s \},
    \]
    where $p(u)$ denotes the amount of power needed to transmit
    $u$ packets.\footnote{In our
    examples, we model the channel as a band-limited AWGN channel with
    bandwidth $W$ and noise level $N_0$. The capacity of such a channel
    when transmitting at power level $P$ is $W \log_2( 1 + P/(N_0W))$.
    Therefore, for such channels we assume $p(u) = \lfloor N_0 W (2^{u/W} -
  1) \rfloor$.} 
    We assume that $p:\mathcal{L} \rightarrow \mathds{R}_{\geq 0}$ is a
    strictly convex and increasing function with $p(0) = 0$. 
\end{itemize}

The dynamics of the data queue and the battery are
\[
  N_{k+1} = [N_{k} - U_{k} + A_{k}]_{L} 
  \quad\text{and}\quad
  S_{k+1} = [S_{k} - p(U_{k})+E_{k}]_{B}.
\]

Packets that are not transmitted during slot $k$ incur a delay penalty
$d(N_{k} - U_{k})$, where $d:\mathcal{L} \rightarrow \mathds{R}_{\geq 0}$ is
a convex and increasing function with $d(0) = 0$.

The data arrival process $\{A_{k}\}_{k\geq 0}$ is i.i.d.\@
with pmf (probability mass function) $P_A$.
The energy arrival process $\{E_{k}\}_{k \geq 0}$ is i.i.d.\@ with  pmf $P_E$
and is also independent of $\{A_{k}\}_{k \geq 0 }$.

The number  $U_{k}$ of packets to transmit are chosen according to a
scheduling policy $f \coloneqq \{f_{k}\}_{k \geq 0}$, where 
\[
  U_{k} = f_{k}(N_{k},S_{k}), \quad U_k \in \mathcal{U}(N_k, S_k).
\]

The performance of a scheduling policy $f$ is given by
\begin{equation} \label{eq:cost}
  J(f) \coloneqq \EXP^{f} \Big[ \sum_{k = 0}^{\infty} \beta^{k} d(N_{k}-U_{k})
  \Bigm| N_{0} = 0,S_{0} =0 \Big],
\end{equation}
where $\beta \in (0,1)$ denotes the discount factor and the expectation is taken with respect to the joint measure on the system variables induced by the choice of $f$.

We are interested in the following optimization problem.
\begin{problem}\label{prob:main}
  Given the buffer length $L$, battery size $B$, power cost $p(\cdot)$, delay
  cost $d(\cdot)$, pmf $P_A$ of the arrival process, pmf $P_E$ of the energy arrival
  process, and the discount factor $\beta$, choose a feasible scheduling
  policy $f$ to minimize the performance $J(f)$ given by~\eqref{eq:cost}.
\end{problem}

\section{Dynamic Programming Decomposition}\label{sec:DP}

The system described above can be modeled as an infinite horizon time
homogeneous Markov decision process (MDP)~\cite{puterman2014markov}. Since
the state and action spaces are finite, standard results from Markov decision
theory imply that there exists an optimal policy which is time homogeneous and
is given by the solution of a dynamic program. To succinctly write the dynamic
program, we define the following Bellman operator:
Define the operator $\mathscr B : [\mathcal{L} \times \mathcal{B} \to
\reals] \to [\mathcal{L} \times \mathcal{B} \to \reals]$ that maps
any $V: \mathcal{L} \times \mathcal{B} \rightarrow \mathds{R}$ to
\begin{multline}\label{DyP}
  \big[\mathscr{B} V\big](n,s) = 
  \min_{u \in \mathcal{U}(n,s)}\Big\{ d(n-u) \\+ 
    \beta\EXP\big[ V([n - u + A]_{L}, [s - p(u) + E]_{B}) \big]
  \Big\},
\end{multline}
where $A$ and $E$ are independent random variables with pmfs $P_A$ and $P_E$. 
Then, an optimal policy for the infinite horizon MDP is given as
follows~\cite{puterman2014markov}. 

\begin{theorem} 
  Let $V^*: \mathcal {L}\times \mathcal {B} \to \reals$ denote the unique
  fixed point of the following equation:
  \begin{equation}\label{eq:bellman}
    V(n,s) = [\mathscr{B} V](n,s),
    \quad \forall (n,s) \in \mathcal L \times \mathcal B.
  \end{equation}
  Furthermore, let $f^*$ be such that $f^*(n,s)$ attains the minimum in the
  right hand side of~\eqref{eq:bellman}.
  Then, the time homogeneous policy $f^{*,\infty} = (f^*, f^*, \dots)$ is optimal
  for Problem~\ref{prob:main}.
\end{theorem}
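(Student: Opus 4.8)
This is the standard verification theorem for discounted MDPs with finite state and action spaces, so my plan is to combine the Banach fixed point theorem with a two‑sided comparison between $V^*$ and the cost of an arbitrary feasible policy.

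\emph{Step 1 (Contraction and well‑posedness).} I would first verify that $\mathscr B$ is a $\beta$‑contraction on $\mathcal V := [\mathcal L \times \mathcal B \to \reals]$ equipped with the sup‑norm $\|V\| := \max_{(n,s)}|V(n,s)|$. Using the elementary bound $|\min_u g_1(u) - \min_u g_2(u)| \le \max_u |g_1(u) - g_2(u)|$ together with $|\EXP[V(\cdot)] - \EXP[W(\cdot)]| \le \|V - W\|$ applied to the expectation term in~\eqref{DyP}, one gets $\|\mathscr B V - \mathscr B W\| \le \beta\|V - W\|$. Since $\mathcal L \times \mathcal B$ is finite, $\mathcal V$ is complete, so Banach's theorem gives a unique $V^* \in \mathcal V$ solving~\eqref{eq:bellman}. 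Moreover $\mathcal U(n,s)$ is finite and nonempty (it always contains $u = 0$, since $p(0) = 0$), so the minimum in~\eqref{DyP} is attained and a selector $f^*$ as in the statement exists.

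\emph{Step 2 (Lower bound).} Define $V_0 \equiv 0$ and $V_{t+1} := \mathscr B V_t$; by Step~1, $V_t \to V^*$ in sup‑norm. For a feasible policy $f = \{f_k\}$, let $W^f_t(n,s) := \EXP^f\big[\textstyle\sum_{k=0}^{t-1} \beta^k d(N_k - U_k) \bigm| N_0 = n, S_0 = s\big]$. A backward induction on $t$ shows $W^f_t \ge V_t$ pointwise: in the inductive step one writes $W^f_{t+1}(n,s) = d(n - f_0(n,s)) + \beta\,\EXP[W^{f'}_t(N_1,S_1)]$ for the shifted policy $f' = \{f_1, f_2, \dots\}$, where the expectation is over $A_0 \sim P_A$, $E_0 \sim P_E$ independently (this is exactly where independence of the arrival processes from $(N_0,S_0)$ and from each other is used), applies the hypothesis $W^{f'}_t \ge V_t$, and then lower‑bounds $d(n - f_0(n,s)) + \beta\,\EXP[V_t(\cdot)]$ by the minimum over $u \in \mathcal U(n,s)$, which is $V_{t+1}(n,s)$. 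Since $d \ge 0$, $W^f_t$ is nondecreasing in $t$ and bounded by $\max_n d(n)/(1-\beta)$, so by monotone convergence $W^f_t(0,0) \to J(f)$; letting $t \to \infty$ gives $J(f) \ge V^*(0,0)$ for every feasible $f$.

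\emph{Step 3 (Upper bound attained by $f^{*,\infty}$), and the obstacle.} Let $\mathscr B_{f^*}$ be the operator obtained from~\eqref{DyP} by fixing the action to $u = f^*(n,s)$ rather than minimizing. The estimate of Step~1 shows it is a $\beta$‑contraction, hence has a unique fixed point in $\mathcal V$, and by~\eqref{eq:cost} that fixed point is precisely the (bounded) cost‑to‑go function of the stationary policy $f^{*,\infty}$, whose value at $(0,0)$ is $J(f^{*,\infty})$. Since $f^*$ attains the minimum in~\eqref{eq:bellman}, $V^* = \mathscr B V^* = \mathscr B_{f^*} V^*$, so $V^*$ is that fixed point; in particular $J(f^{*,\infty}) = V^*(0,0)$. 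Combining with Step~2, $J(f^{*,\infty}) = V^*(0,0) \le J(f)$ for every feasible $f$, i.e.\ $f^{*,\infty}$ is optimal for Problem~\ref{prob:main}. The only delicate point is the backward induction of Step~2 — conditioning on $(N_k,S_k)$ and invoking the tower property so that the per‑stage expectation genuinely matches the one in $\mathscr B$; the interchange of limit and expectation is free here because $d \ge 0$ and discounting make every partial sum uniformly bounded, and the rest is routine contraction‑mapping bookkeeping. This theorem is also available directly from~\cite{puterman2014markov}.
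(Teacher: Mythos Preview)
Your argument is correct and essentially the standard verification proof for finite discounted MDPs. Note, however, that the paper does not give its own proof of this theorem at all: it simply cites \cite{puterman2014markov} as the source and moves on. So there is nothing to compare against; your write-up supplies the details the paper chose to omit, and you yourself note at the end that the result is available directly from that reference. One small quibble: the induction in Step~2 is forward (from $t$ to $t+1$), not backward, and you should state the hypothesis as holding uniformly over all feasible policies so that it applies to the shifted policy $f'$; both points are implicit in what you wrote.
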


The dynamic program described in \eqref{eq:bellman} can be solved using
standard algorithms such as value
iteration,  policy iteration, or linear programming
algorithms~\cite{puterman2014markov}.

\subsection{Properties of the value function}

Let $\mathcal {M}$ denote the family of the functions $V:{\cal{L}} \times
{\cal{B}} \rightarrow \mathds{R}$ such that for any $s \in {\mathcal{B}}$,
$V(n,s)$ is weakly increasing in $n$ and for any $n \in \mathcal{L}$, $V(n,s)$
is weakly decreasing in $s$.
Furthermore, let $\mathcal F_{s}$ denote the family of functions $f \colon
\mathcal L \times \mathcal B \to \mathcal U$ such that  for any $n \in \mathcal{L}$, $f(n,s)$ is weakly increasing in $s$. Similarly, let $\mathcal F_{n}$ be family of functions $f \colon
\mathcal L \times \mathcal B \to \mathcal U$, such that for any $s \in \mathcal B$, $f(n,s)$ is weakly increasing in $n$.

\begin{proposition} \label{thm:monotone}
  The optimal value function $V^* \in \mathcal{M}$.
\end{proposition}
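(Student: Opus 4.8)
The plan is to prove the stronger statement that the Bellman operator $\mathscr{B}$ maps $\mathcal{M}$ into itself, and then to pass to the limit in value iteration. Since $\mathscr{B}$ is a contraction (discounted, finite MDP), the iterates $V_{t+1} = \mathscr{B} V_t$ starting from $V_0 \equiv 0$ converge in sup norm to the unique fixed point $V^*$. The constant function $0$ lies in $\mathcal{M}$, and $\mathcal{M}$ is closed under pointwise limits (being weakly increasing in $n$ and weakly decreasing in $s$ are preserved under pointwise convergence), so if $\mathscr{B}(\mathcal M)\subseteq\mathcal M$ then $V^*\in\mathcal{M}$. Thus the whole argument reduces to: if $V \in \mathcal{M}$, then $\mathscr{B} V \in \mathcal{M}$.

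\emph{Monotonicity in the battery state.} Fix $n$ and take $s_1 \le s_2$. Since $p$ is increasing, $\mathcal{U}(n, s_1) \subseteq \mathcal{U}(n, s_2)$, so any $u$ that is optimal at $(n,s_1)$ is feasible at $(n,s_2)$. Using that same $u$ at $(n,s_2)$ leaves the term $d(n-u)$ and the queue transition $[n-u+A]_L$ unchanged, while $[s_2 - p(u) + E]_B \ge [s_1 - p(u) + E]_B$; as $V$ is weakly decreasing in its second argument, the continuation value does not increase. Hence $[\mathscr{B} V](n, s_2) \le [\mathscr{B} V](n, s_1)$.

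\emph{Monotonicity in the queue state.} Fix $s$ and take $n_1 \le n_2$. Let $u_2 \in \mathcal{U}(n_2, s)$ attain the minimum defining $[\mathscr{B}V](n_2, s)$, and define the shadow action $u_1 \coloneqq \max\{0,\, u_2 - (n_2 - n_1)\}$. One checks that $u_1 \le \min\{u_2, n_1\}$, so (using that $p$ is increasing) $u_1 \in \mathcal{U}(n_1, s)$, and moreover $n_1 - u_1 \le n_2 - u_2$ and $p(u_1) \le p(u_2)$. Using $u_1$ at $(n_1, s)$: the delay term obeys $d(n_1 - u_1) \le d(n_2 - u_2)$ since $d$ is increasing; the queue transition obeys $[n_1 - u_1 + A]_L \le [n_2 - u_2 + A]_L$; and the battery transition obeys $[s - p(u_1) + E]_B \ge [s - p(u_2) + E]_B$. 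Since $V \in \mathcal{M}$ is weakly increasing in the queue coordinate and weakly decreasing in the battery coordinate, $V([n_1 - u_1 + A]_L, [s - p(u_1) + E]_B) \le V([n_2 - u_2 + A]_L, [s - p(u_2) + E]_B)$ for every realization of $(A,E)$. Taking expectations and adding the delay terms yields $[\mathscr{B}V](n_1, s) \le d(n_1 - u_1) + \beta\,\EXP[\,\cdot\,] \le d(n_2 - u_2) + \beta\,\EXP[\,\cdot\,] = [\mathscr{B}V](n_2, s)$.

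The monotonicity-of-composition bookkeeping and the value-iteration limiting argument are routine. The only step that needs any thought is the construction of the shadow action $u_1$ in the queue-monotonicity argument: one must \emph{simultaneously} secure feasibility ($u_1 \le n_1$ and $p(u_1) \le s$) together with the two inequalities $n_1 - u_1 \le n_2 - u_2$ and $p(u_1) \le p(u_2)$ that make the continuation value dominate, and the choice $u_1 = \max\{0,\, u_2 - (n_2 - n_1)\}$ is precisely what makes all four hold at once. Note that only monotonicity of $p$ and $d$ is used; convexity plays no role in this proposition.
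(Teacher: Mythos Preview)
Your proof is correct and follows essentially the same route as the paper: show that $\mathscr{B}$ preserves $\mathcal{M}$ and then pass to the limit via value iteration. The only minor tactical difference is in the queue-monotonicity step, where the paper compares adjacent states $n$ and $n+1$ and splits into the two cases $u^*\le n$ (reuse the same action) and $u^*=n+1$ (use action $n$), while you handle all $n_1\le n_2$ at once with the shadow action $u_1=\max\{0,\,u_2-(n_2-n_1)\}$; both arguments are valid and yield the same conclusion.
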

The proof is presented in the Appendix.
Proposition~\ref{thm:monotone} says the optimal cost weakly increases with the
queue state and weakly decreases with the battery state. Thus, it's better to
have less packets in the queue and it is better to have more energy in the
battery. Such a result is intuitively appealing.

One might argue that it should be the case that the optimal policy should be weakly increasing in state of the queue, and weakly increasing in the available energy in the battery. In particular, if it is optimal to transmit $u$
packets when the queue state is $n$, then (for the same battery state) the
optimal number of packets to transmit at any queue state larger than $n$
should be at least $u$. Similarly, if it is optimal to transmit $u$ packets
when the battery state is $s$, then (for the same queue state) the optimal
number of packets to transmit at any battery state larger than~$s$ should be
at least $u$. In the next section, we present counterexamples that show 
both of these properties do not hold. The code for all the results is available at \cite{code}.

\section{Counterexamples on the monotonicity of optimal policies}
\label{sec:counterexample}

\subsection{On the monotonicity in queue state} \label{sec:queue}

\begin{figure}[!tb]
  \hbox to \linewidth{
    \hss
    \begin{subfigure}[t]{0.35\linewidth}
      \includegraphics[page=1,scale=0.95]{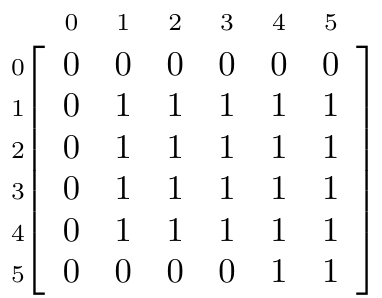}
      \caption{The optimal policy}\label{fig:ex1a}
    \end{subfigure}%
    \hskip 3em 
    \begin{subfigure}[t]{0.35\linewidth}			
      \includegraphics[page=2,scale=0.95]{policies.pdf}
      \caption{The best monotone \rlap{policy}}\label{fig:ex1b}				
    \end{subfigure}%
  \hss}
  \caption{The optimal and the best monotone policies for the example of Sec.~\ref{sec:queue}.}
  \label{fig:ex1}
\end{figure}

Consider the communication system with a band-limited AWGN channel where
$\mathcal{L} = 5 $, $\mathcal{B} = 5 $, $\beta = 0.99$, $N_0 = 2.0$, $W = 1.75$
(thus, $p(u) = \lfloor 3.5 \cdot (2^{(u/1.75)}-1) \rfloor$), $d(q) = q$, data
arrival distribution $P_A = \text{Geom}(0.9)$, and energy arrival distribution
$P_E = \text{Geom}(0.89)$, where both the pmfs are truncated and normalized with domain equal to $5$.

The optimal policy for this system (obtained by policy
iteration~\cite{puterman2014markov}) is shown in Fig.~\ref{fig:ex1a},
where the rows correspond to the current queue length and the columns
correspond to the current energy level. Note that the policy is not weakly
increasing in queue state (i.e, $f^{*} \notin \mathcal{F}_{n}$). For instance, $f(5,3) < f(4,3)$.

Given that the optimal policy is not monotone, one might wonder how much do we
lose if we use a monotone policy instead of the optimal policy. To
characterize this, we define the best queue-monotone policy as:
\[
  f^{\circ}_{n} = \arg\min_{f \in \mathcal{F}_{n}} 
  \left\{ \max_{(n,s) \in \mathcal L \times \mathcal B} 
  \bigl|   V(n,s) - V^*(n,s) \bigr|   \right\}
\]
and let $V^{\circ}_{n}$ denote the corresponding value function. 

The best monotone policy cannot be obtained using dynamic programming and one
has to resort to a brute force search over all monotone policies. For the
model described above, there are $86400$ monotone
policies.\footnote{\label{fnt}Due to the power constraint $\mathcal U(n,s)$, it
  is not possible to count the number of monotone functions using
combinatorics. The number above is obtained by explicit enumeration.}
The best monotone policy obtained by searching over these is shown is
Fig.~\ref{fig:ex1b}.
The worst case difference between the two value functions is given
by
\[
  \alpha_{n} = \max_{(n,s) \in \mathcal L \times \mathcal B}
  \left\{
    \frac{ V^\circ(n,s) - V^*{n}(n,s)}
    { V^*(n,s) }
  \right\}
  = 0.1186.
\]

Thus, for this counterexample, the best queue-monotone policy performs $11.86\%$
worse than the optimal policy. 

We also compare the performance of the optimal policy with the greedy
policy, which is a heuristic policy that transmits the maximum number of
packets in each state. The greedy policy is monotone so we
expect $\alpha_n^{\text{greedy}} \ge \alpha_n$. In this particular example, we find that $\alpha_n^{\text{greedy}} = 0.8609$. Thus, the greedy policy performs $86.09\%$ worse than the optimal policy.

\subsection{On the monotonicity in battery state} \label{sec:battery}
Consider the communication system described in Sec.~\ref{sec:queue} but with
the data arrival distribution $P_A=[0.33,0.67,0,0,0]$ and the energy arrival distribution $P_E = [0.05,0.90,0.05,0,0]$.

The optimal policy (obtained using policy
iteration~\cite{puterman2014markov}) is shown in Fig.~\ref{fig:ex2a}.
Note that the policy is not weakly increasing in the battery state (i.e $f^{*} \notin \mathcal{F}_{s}$). In particular, we have that $f^{*}(5,2) > f^{*}(5,3)$.

\begin{figure}[!tb]
  \hbox to \linewidth{
    \hss
    \begin{subfigure}[t]{0.35\linewidth}
      \includegraphics[page=3,scale=0.95]{policies.pdf}
      \caption{The optimal policy}\label{fig:ex2a}
    \end{subfigure}%
    \hskip 3em 
    \begin{subfigure}[t]{0.35\linewidth}			
      \includegraphics[page=4,scale=0.95]{policies.pdf}
      \caption{The best monotone \rlap{policy}}\label{fig:ex2b}				
    \end{subfigure}
  \hss}
  \caption{The optimal and the best monotone policies for the example of Sec.~\ref{sec:queue}.}
  \label{fig:ex2}
\end{figure}

Given that optimal policy is not monotone, the previous question arises again that how much do we lose if we use a monotone policy instead of the optimal policy. To characterize this, we define the best battery-monotone policy as:
\begin{align*}
  f^{\circ}_{s} = \arg\min_{f \in \mathcal{F}_{s}} 
  \left\{ \max_{(n,s) \in \mathcal L \times \mathcal B}    
    \bigl|   V(n,s)-V^*(n,s) \bigr|   \right\}
\end{align*}
and let $V^{\circ}_{s}$ denote the corresponding value function.

As before, we find the best monotone policy by a a brute force search over
all $303750$ monotone battery-policies.\footref{fnt} The resultant policy is shown in Fig.~\ref{fig:ex2b}.

The worst case difference between the two value functions is given
by
\[
  \alpha_{s} = \max_{(n,s) \in \mathcal L \times \mathcal B}
  \left\{
    \frac{ V^\circ(n,s) - V^*{s}(n,s)}
    { V^*(n,s)}
  \right\}
  = 0.0560.
\]

Thus, for this counterexample, the best battery-monotone policy performs $5.60\%$
worse than the optimal policy.  Note that for this example, the best monotone policy is a greedy policy, hence the performance of the greedy policy is  same as that of the best monotone policy.

\section{Counterexamples for fading channels} \label{sec:fading}

\begin{figure*}[!tb]
  \hbox to \linewidth{
    \hss
    \begin{subfigure}[t]{0.20\linewidth}
      \includegraphics[page=5,scale=0.95]{policies.pdf}
      \caption{$f^*(\cdot, \cdot, h=1)$}\label{fig:ex3a}
    \end{subfigure}%
      \hfill
    \begin{subfigure}[t]{0.20\linewidth}
      \includegraphics[page=6,scale=0.95]{policies.pdf}
      \caption{$f^*(\cdot, \cdot, h=2)$}\label{fig:ex3b}
    \end{subfigure}%
      \hfill
    \begin{subfigure}[t]{0.20\linewidth}
      \includegraphics[page=7,scale=0.95]{policies.pdf}
      \caption{$f^*(\cdot, \cdot, h=3)$}\label{fig:ex4a}
    \end{subfigure}%
    \hfill
    \begin{subfigure}[t]{0.20\linewidth}
      \includegraphics[page=8,scale=0.95]{policies.pdf}
      \caption{$f^*(\cdot, \cdot, h=1)$}\label{fig:ex4b}
    \end{subfigure}%
  \hss}
  \caption{The optimal policy for the examples of Sec.~\ref{sec:fading-queue}
    shown in subfigures (a)--(b) and Sec.~\ref{sec:fading-battery} shown in
  subfigures (c)--(d).}
\end{figure*}

\subsection{Channel model with i.i.d.\@ fading}

Consider the model in Sec.~\ref{sec:model} where the channel has i.i.d.\@
fading. In particular, let $H_k \in \mathcal H$ denote the channel state at
time~$k$ and $g(H_k)$, where $g : \mathcal H \to \reals_{> 0}$, denote the 
attenuation at state~$H_k$. Thus, the power needed to transmit $u$ packets
when the channel is in state $h$ is given by $p(u)/g(h)$. We assume
that $\{H_k\}_{k \ge 0}$ is an i.i.d.\@ process with pmf $P_H$ that is
independent of the data and energy arrival processes $\{A_k\}_{k \ge 0}$ and
$\{E_k\}_{k \ge 0}$. 

\subsection{On the monotonicity in queue state} \label{sec:fading-queue}

Consider the model in Sec.~\ref{sec:queue} with $N_{0} = 1$, $W = 1.75$, and  an i.i.d.\@ fading channel
where $\mathcal H = \{1,2\}$, $g(\cdot) = \{0.7, 0.8\}$ and $P_H =
[0.4, 0.6]$. The optimal policy for this model (obtained using policy
iteration) is shown in Fig.~\ref{fig:ex3a}--\ref{fig:ex3b}. Note that for all $h$, the optimal policy in not monotone in the queue length. 

In this case, there are  $(4320) \times (1296) \times (362) \approx 10^{8}$ 
monotone policies. Therefore, a brute force search to find the best monotone
policy is not possible. We choose a heuristic monotone policy $f^\circ_n$
which differs from $f^*$ only at the following points:
$f^\circ_n(5,s,1)=1$, for $s \in \{1,2,3,4\}$, $f^\circ_n(5,1,2) = 1$, and $f^\circ_n(5,s,2) = 2$, for $s \in \{2,3\}$.
 The policy $f^\circ_n$ may
be thought of as the queue-monotone policy that is closest to~$f^*$. Let
$V^\circ_n$ denote the corresponding value function. The worst case difference
between the two value functions is given by
\[
  \alpha_{n} = \max_{(n,s,h) \in \mathcal L \times \mathcal B \times \mathcal H}
    \frac{\bigl| V^*(n,s,h) - V^\circ_{n}(n,s,h)\bigr|}
    {\bigl| V^*(n,s,h) \bigr|}
  = 0.1344.
\]
Thus, the heuristically chosen queue-monotone policy performs $13.44\%$ worse
than the optimal policy. We also compare the optimal policy with the greedy policy and find that $\alpha_n^{\text{greedy}} = 0.8005$. Thus, the greedy policy performs $80.05\%$ worse than the optimal policy.

\subsection{On the monotonicity in the battery state}\label{sec:fading-battery}
Consider the model in Sec.~\ref{sec:battery} with $N_{0} = 1.55$, $W = 1.75$, and  an i.i.d.\@ fading channel
where $\mathcal H = \{1,2\}$, $g(\cdot) = \{0.75, 0.80\}$, and $P_H =
[0.3,0.7]$. The optimal policy for this model (obtained using policy
iteration) is shown in Fig.~\ref{fig:ex4a}--\ref{fig:ex4b}. Note that for $h
\in \{1,2\}$, the optimal policy is not monotone in the battery state.

In this case, there are $(629856) \times (30375019) \approx 10^{10}$ monotone policies. Therefore, a
brute force search is not possible. As before, we choose a heuristic policy
$f^\circ_s$ which is the battery-monotone policy that is closest to $f^*$. In
particular, $f^\circ_s$ differs from $f^*$ only at two points:
$f^\circ_s(5,3,1) = 1$ and $f^\circ_s(5,3,2) = 1$. Let $V^\circ_s$ denote the
corresponding value function. The worst case difference between the two value
functions is given by
\[
  \alpha_{s} = \max_{(n,s,h) \in \mathcal L \times \mathcal B \times \mathcal H}
    \frac{\bigl| V^*(n,s,h) - V^\circ_{s}(n,s,h)\bigr|}
    {\bigl| V^*(n,s,h) \bigr|}
  = 0.0560.
\]
Thus, the heuristically chosen battery-monotone policy performs $5.60\%$ worse than the optimal policy. Note that for this example, the best monotone policy is a greedy policy, hence the performance of the greedy policy is  same as that of the best monotone policy.

\section{Conclusion}
In this paper, we consider delay optimal strategies in cross layer design with
energy harvesting transmitter. We show that the value function is weakly
increasing in the queue state and weakly decreasing in the battery state. We
show via counterexamples that the optimal policy is not monotone in queue
length nor in the available energy in the battery. 

\subsection{Discussion about the counterexamples} 
One might ask why the optimal policy is not monotone in the above model. The
standard argument in MDPs to establish monotonicity of the optimal policies is
to show that the value-action function is submodular in the state and action.
The value-action function is given by 
\begin{equation*}
  {H}(n,s,u) = d(n-u) + 
  \beta\EXP\big[ V([n - u + A]_{L}, [s - p(u) + E]_{B}) \big]
\end{equation*}
A sufficient condition for the optimal policy to be weakly increasing in the
queue length is:
\begin{enumerate}
  \item[\textbf{(S1)}] for every $s \in \mathcal{B}$, $H(n,s,u)$ is submodular in
    $(n,u)$.
\end{enumerate}
Note that since $d(\cdot)$ is convex, $d(n-u)$ is submodular in $(n,u)$. Thus, a
sufficient condition for (S1) to hold is:
\begin{enumerate}
  \item[\textbf{(S2)}] for all $s\in \mathcal{B}$, $\EXP\big[ V([n - u + A]_{L}, [s -
    p(u) + E]_{B}) \big]$ is submodular in $(n,u)$.
\end{enumerate}
Since submodularity is preserved under addition, a sufficient condition for
(S2) to hold is:
\begin{enumerate}
  \item[\textbf{(S3)}] for all $s \in \mathcal{B}$, $V(n-u, s-p(u))$ is submodular in
    $(n,u)$. 
\end{enumerate}    

By a similar argument, it can be shown that a sufficient condition for the
optimal policy to be weakly increasing in battery state is: 
\begin{enumerate}
  \item[\textbf{(S4)}] for all $n \in \mathcal{L}$, $V(n-u, s-p(u))$ is submodular in $(s,u).$
\end{enumerate}

We have not been able to identify sufficient conditions under which (S3) or
(S4) hold. Note that if the data were backlogged, then we do not need to keep
track of the queue state; thus, the value function is just a function of the
battery state. In such a scenario, (S4) simplifies to $V(s-p(u))$ is
submodular in $(s,u)$. Since $p(\cdot)$ is convex, it can be shown that
convexity of $V(s)$ is sufficient to establish submodularity of $V(s-p(u))$.
This is the essence of the argument given in~\cite{mao2014joint,sinha2012optimal}.

Similarly, if the transmitter had a steady supply of energy, then we do not
need to keep track of the battery state; thus, the value function is just a
function of the queue state. In such a scenario, (S3) simplifies to $V(n-u)$
is submodular in $(n,u)$. It can be shown that convexity of the $V(n)$ is
sufficient to establish submodularity of $V(n-u)$. This is the essence of the
argument given in~\cite{berry2000power}.

In our model, data is not backlogged and energy is intermittent. As a result,
we have two queues---the data queue and the energy queue---which have coupled
dynamics. This coupling makes it difficult to identify conditions under which
$V(n-u,s-p(u))$ will be submodular in $(n,u)$ or $(s,u)$.

\subsection{Implication of the results}

In general, there are two benefits if one can establish that the optimal
policy is monotone. The first advantage is that monotone policies are easier
to implement. In particular, one needs a $(L+1) \times (B+1)$-dimensional
look-up table to implement a general transmission policy (similar to the
matrices shown in Figs.~\ref{fig:ex1} and~\ref{fig:ex2}). In contrast, one
only needs to store the thresholds boundaries of the decision regions (which can be stored in a sparse matrix) to implement a queue- or battery-monotone
policy. Our counterexamples show that such a simpler implementation will
result in a loss of optimality in energy-harvesting systems.

The second advantage is that if we know that the optimal policy is monotone,
we can search for them efficiently using monotone value iteration and monotone
policy iteration~\cite{puterman2014markov}. Our counterexamples show that
these more efficient algorithms cannot be used in energy-harvesting systems.

One might want to restrict to monotone policies for the sake of implementation
simplicity. However, if the system does not satisfy properties (S3) and (S4)
mentioned in the previous section, then dynamic programming cannot be used to
find the best monotone policy. Thus, one has to resort to a brute force
search, which suffers from the curse of dimensionality.

\appendix

\subsection{Monotonicity of Bellman operator}

\begin{lemma} \label{lem:monotone}
  Given $V \colon \mathcal{L} \times \mathcal{B} \to \reals$, define $H
  \colon \mathcal{L} \times \mathcal{B} \times \mathcal{U} \to \reals$: 
  \begin{equation*}
    {H}(n,s,u) = d(n-u) + 
    \beta\EXP\big[ V([n - u + A]_{L}, [s - p(u) + E]_{B}) \big].
  \end{equation*}
  If $V \in \mathcal{M}$, then for all $n \in \mathcal{L}$, $s
  \in \mathcal{B}$, and $u \in \mathcal{U}(n,s)$:
  \begin{enumerate}
    \item ${H}(n,s,u) \leq {H}([n+1]_L,s,u)$. 
    \item ${H}(n,s,n) \leq {H}([n+1]_L,s,[n+1]_L)$.
    \item ${H}(n,[s+1]_B,u) \leq {H}(n,s,u)$.
  \end{enumerate}
  Consequenly, $\mathscr{B}V \in \mathcal{M}$.
\end{lemma}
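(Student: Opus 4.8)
The plan is to establish the three inequalities separately, each by a direct monotone-coupling argument on the post-decision states, and then deduce $\mathscr B V \in \mathcal M$ as a corollary by taking minima over the feasible action sets. Throughout I will use two elementary observations: first, the truncation map $x \mapsto [x]_K = \min\{x,K\}$ is weakly increasing in $x$; second, since $\{A_k\}$ and $\{E_k\}$ are independent of everything, every expectation below is over the same product measure $P_A \otimes P_E$, so it suffices to compare the integrands pointwise in $(A,E)$.

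For part 1, fix $s$, $u \le n$ with $p(u) \le s$, and compare the two post-decision queue states $[n-u+A]_L$ and $[[n+1]_L - u + A]_L$. Since $n \le [n+1]_L$, the first is $\le$ the second (using monotonicity of truncation and the fact that nested truncations satisfy $[[x]_L + c]_L \ge [x+c]_L$ when $c \ge 0$ — more simply, $[n+1]_L - u + A \ge n - u + A$ whenever $[n+1]_L \ge n$, which always holds). The battery state $[s-p(u)+E]_B$ is identical on both sides. Because $V \in \mathcal M$ is weakly increasing in its first argument for each fixed second argument, $V$ evaluated at the left state is $\le$ at the right state; adding the common term $d(n-u)$ and noting $d(n-u) \le d([n+1]_L - u)$ by monotonicity of $d$ finishes part 1. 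Part 3 is the mirror image: with $s \le [s+1]_B$ the post-decision battery state increases from $[s-p(u)+E]_B$ to $[[s+1]_B - p(u)+E]_B$ while the queue state is unchanged, and since $V$ is weakly \emph{decreasing} in the battery argument, the value drops; the delay term $d(n-u)$ is unchanged, so $H(n,[s+1]_B,u) \le H(n,s,u)$.

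Part 2 is the genuinely new ingredient and I expect it to be the main obstacle, because both the action and the state change. Here I compare $H(n,s,n)$ against $H([n+1]_L,s,[n+1]_L)$. Write $u = n$ and $u' = [n+1]_L$. For the queue: $n - u + A = A$ and $[n+1]_L - u' + A = A$ as well (one checks $[n+1]_L - [n+1]_L = 0$ in both cases $n < L$ and $n = L$), so the post-decision queue states coincide. For the battery we need $[s - p(n) + E]_B \le [s - p([n+1]_L) + E]_B$ — wait, this goes the wrong way since $p$ is increasing, so $p([n+1]_L) \ge p(n)$ and the right battery state is \emph{smaller}; but $V$ is decreasing in battery, so a smaller battery gives a \emph{larger} value, which is the wrong direction. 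The resolution must be that the delay term compensates: $d(n - u) = d(0) = 0$ and $d([n+1]_L - u') = d(0) = 0$, so both delay terms vanish and there is no compensation available — so the clean coupling argument seems to fail, and instead one must argue more carefully, presumably by noting that at the state $(n,s)$ the action $u = n$ was in fact feasible only if $p(n) \le s$, and at $([n+1]_L,s)$ taking action $[n+1]_L$ requires $p([n+1]_L) \le s$; when $n = L$ the two sides are literally identical, and when $n < L$ one uses that transmitting one extra packet empties the queue in both cases but the only difference is the extra energy $p(n+1)-p(n)$ spent. The correct argument is likely: $V \in \mathcal M$ plus $d$ convex is not enough pointwise, so one instead bounds $H([n+1]_L,s,[n+1]_L)$ from below — actually from the desired direction $H(n,s,n) \le H([n+1]_L, s, [n+1]_L)$ one wants an \emph{upper} bound on the left and \emph{lower} on the right. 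I would handle this by a careful case split on whether $n = L$ (trivial equality) and, for $n < L$, observing that both queue components equal $[A]_L$ while the left battery component dominates the right; since $V$ is decreasing in battery we get $V(\text{left}) \le V(\text{right})$ only if the battery inequality runs left $\ge$ right, which it does — so in fact part 2 follows directly and cleanly after all, with no need for the delay terms. The subtlety to get right is simply verifying $[n+1]_L - [n+1]_L = 0$ and $p(n) \le p([n+1]_L)$ in the edge case $n = L$, where $[n+1]_L = L = n$ and everything is an equality.

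Finally, to conclude $\mathscr B V \in \mathcal M$: for the queue-monotonicity of $\mathscr B V$, fix $s$ and $n < L$; let $u^\star \in \mathcal U(n,s)$ attain $[\mathscr B V](n,s)$. If $u^\star < n$ then $u^\star \in \mathcal U(n+1,s)$ as well (the constraints $u \le n+1$ and $p(u) \le s$ still hold), and part 1 gives $[\mathscr B V](n,s) = H(n,s,u^\star) \le H(n+1,s,u^\star) \ge [\mathscr B V](n+1,s)$ — no, I need $\le [\mathscr B V](n+1,s)$, which I do \emph{not} get this way since $[\mathscr B V](n+1,s) = \min_u H(n+1,s,u) \le H(n+1,s,u^\star)$. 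The standard fix is the usual one for monotone DP: one shows $[\mathscr B V](n,s) \le [\mathscr B V]([n+1]_L,s)$ by taking $u^\dagger$ optimal at $([n+1]_L, s)$ and exhibiting a feasible action at $(n,s)$ with no larger cost — if $u^\dagger \le n$ use part 1 in reverse direction (which needs $H$ increasing in $n$, i.e.\ part 1), and if $u^\dagger = [n+1]_L > n$ use part 2 with the feasible action $u = n$ at state $(n,s)$, giving $[\mathscr B V](n,s) \le H(n,s,n) \le H([n+1]_L,s,[n+1]_L) = [\mathscr B V]([n+1]_L,s)$. For battery-monotonicity, fix $n$ and $s < B$; take $u^\ddagger$ optimal at $(n, s)$, note $u^\ddagger \in \mathcal U(n, [s+1]_B)$ since enlarging the battery only relaxes $p(u) \le s$, and apply part 3: $[\mathscr B V](n,[s+1]_B) \le H(n,[s+1]_B,u^\ddagger) \le H(n,s,u^\ddagger) = [\mathscr B V](n,s)$. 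Assembling these two monotonicities gives $\mathscr B V \in \mathcal M$.
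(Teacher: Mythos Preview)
Your argument, once the self-corrections are followed through, is correct and matches the paper's approach exactly: the paper likewise derives Properties~1--3 from the monotonicity of $d$ and of $V$ (details omitted there for space), and then proves $\mathscr B V \in \mathcal M$ by taking the minimizer $u^*$ at the larger queue state $(n+1,s)$ and splitting on whether $u^* = n+1$ (invoke Property~2 with the feasible action $n$ at $(n,s)$, noting $p(n+1)\le s \Rightarrow p(n)\le s$) or $u^* \le n$ (invoke Property~1), while the battery direction uses Property~3 together with $\mathcal U(n,s) \subseteq \mathcal U(n,s+1)$. Your treatment of Property~2 --- observing that both post-decision queue states equal $[A]_L$, that $p$ increasing makes the right-hand battery state smaller, and that $V$ decreasing in battery then gives the desired inequality --- is exactly the detail the paper suppresses.
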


\begin{proof}
  The properties of $H$ follow from the monotonicity of $d(\cdot)$ and the
  fact that monotonicity is preserved under expectations. The details are
  omitted due to lack of space.

  To prove that $W = \mathscr{B}V \in \mathcal M$, we consider any $n \in \mathcal L$ and $s
  \in \mathcal B$ and let $f(n,s)$ denote a policy that achieves the minimum
  in the definition of $\mathscr{B}V$. There are two cases: 
  $f({n+1},s) \neq n+1$ and $f(n+1,s) = n+1$.
  \begin{enumerate}
    \item Suppose $u^* = f(n+1,s) \neq n+1$. Then, it must be the case that
      $u^* \in \mathcal U(n,s)$. Thus,
      \begin{align*}
        W(n+1,s) &= H(n+1, s, u^*) 
        \stackrel{(a)}\ge H(n, s, u^*) \\
        & \ge \min_{u \in \mathcal U(n,s)} H(n,s,u) 
        = W(n,s),
      \end{align*}
      where $(a)$ follows from Property~1.

    \item Suppose $u^* = f(n+1,s) = n+1$. Then, it must be the case that
      $p(n+1) \le s$ and, therefore, $p(n) \le s$. Hence $n \in \mathcal
      U(n,s)$. Thus,
      \begin{align*}
        W(n+1,s) &= H(n+1,s,n+1) 
        \stackrel{(b)}\ge H(n,s,n) \\
        & \ge \min_{u \in \mathcal U(n,s)} H(n,s,u) 
        = W(n,s),
      \end{align*}
      where $(b)$ follows from Property~2.
  \end{enumerate}
  As a result of both of these cases, we get that
  \begin{align}\label{pf:1}
    W(n,s) \leq W(n+1,s).
  \end{align}
\item Now let $u^{*} = f(n,s)$, recall that $\mathcal{U} (n,s) \subseteq \mathcal{U} (n,s+1)$ then $u^{*} \in \mathcal{U}(n,s+1)$ thus
  \begin{align}\label{pf:2}
    W(n,s) &= H(n, s, u^*)
    \stackrel{(c)}\ge H(n, s+1, u^*)\notag \\
    & \stackrel{(d)}\ge \min_{u \in \mathcal U(n,s+1)} H(n,s+1,u)
    = W(n,s+1),
  \end{align}
  Where $(c)$ follows from Property~3 and $(d)$ follows from the fact that $u^{*} \in \mathcal{U}(n,s+1)$.

  From \eqref{pf:1} and \eqref{pf:2} we infer $W \in \mathcal{M}$.
\end{proof}

\subsection{Proof of Proposition~\ref{thm:monotone}}

Arbitrarily initialize $V^{(0)} \in \mathcal M$ and
for $n \in \integers_{>0}$, recursively define 
\(
  V^{(n+1)} = \mathscr{B} V^{(n)}.
\)
Since $V^{(0)} \in \mathcal M$, Lemma~\ref{lem:monotone} implies that 
$V^{(n)} \in \mathcal{M}$, for all $n \in
\integers_{>0}$. Since monotonicity is preserved under the limit, we have that
$\lim_{n \rightarrow \infty} V_{0}^{(n)} \in \mathcal{M}$.
By~\cite{puterman2014markov},
\(
  \lim_{n \rightarrow \infty} V_{0}^{(n)} = V.
\)
Hence, $V \in \mathcal{M}$.




\end{document}